\definecolor{mauve}{rgb}{0.58,0,0.82}
\let\chapter\section
\theoremstyle{definition}
\newtheorem{theorem}{Theorem}
\newtheorem{lemma}[theorem]{Lemma}
\newtheorem{example}{Example}
\newtheorem{remark}{Remark}
\newtheorem{corollary}{Corollary}
\newtheorem{definition}{Definition}
\newcommand{\vit}[1]{{{\color{blue} \textbf{Vitaly:} {#1}}}}
\newcommand{\code}{{\mathcal{C}}}
\newcommand{\ff}{{\mathbb{F}}}
\newcommand{\rr}{{\mathbb R}}
\newcommand{\zz}{{\mathbb Z}}
\newcommand{\nn}{{\mathbb N}}
\newcommand{\bldalpha}{{\mbox{\boldmath $\alpha$}}}
\newcommand{\bldg}{{\mbox{\boldmath $g$}}}
\newcommand{\bldG}{{\mbox{\boldmath $G$}}}
\newcommand{\bldx}{{\mbox{\boldmath $x$}}}
\newcommand{\bldy}{{\mbox{\boldmath $y$}}}
\newcommand{\bldzero}{{\mbox{\boldmath $0$}}}
\newcommand{\Tm}[1]{Theorem~\protect\ref{#1}}
\newcommand{\Le}[1]{Lemma~\protect\ref{#1}}
\newcommand{\Ex}[1]{Example~\protect\ref{#1}}
\newcommand{\gth}{\theta}
\newcommand{\beq}{\begin{equation}}
\newcommand{\eeq}{\end{equation}}
\newcommand{\beql}[1]{\begin{equation} \label{#1}}
\newcommand{\eeql}{\end{equation}}
\newcommand{\beqa}{\begin{eqnarray*}}
\newcommand{\eeqa}{\end{eqnarray*}}
\newcommand{\beqal}[1]{\begin{eqnarray} \label{#1}}
\newcommand{\eeqal}{\end{eqnarray}}
\newcommand{\beqan}{\begin{eqnarray}}
\newcommand{\eeqan}{\end{eqnarray}}
\newcommand{\bpf}{\begin{proof}}
\newcommand{\epf}{\end{proof}}
\newcommand{\ben}{\begin{enumerate}}
\newcommand{\een}{\end{enumerate}}
\newcommand{\bit}{\begin{itemize}}
\newcommand{\eit}{\end{itemize}}
\newcommand{\bab}{\begin{abstract}}
\newcommand{\eab}{\end{abstract}}
\newcommand{\bke}{\begin{keywords}}
\newcommand{\eke}{\end{keywords}}
\newcommand{\bfg}{\boldsymbol{g}}
\newcommand{\bfi}{\boldsymbol{i}}
\newcommand{\bfx}{\boldsymbol{x}}
\def\mymedskip{\vskip\medskipamount}
\def\mymedbreak{\par \ifdim\lastskip<\medskipamount
  \removelastskip \penalty-100 \mymedskip \fi}
\def\myaftermedspace{\par \ifdim\lastskip<\medskipamount
  \removelastskip \penalty55\mymedskip\fi}
\newcommand{\eop}{{\unskip\nobreak\hfil\penalty50
          \hskip2em\hbox{}\nobreak\hfil$\Box$
          \parfillskip=0pt \finalhyphendemerits=0 \par}}
\newcommand{\btm}[1]{\begin{theorem} \label{#1}}
\newcommand{\etm}{\end{theorem}}
\newcommand{\ble}[1]{\begin{lemma} \label{#1}}
\newcommand{\ele}{\end{lemma}}
\newcommand{\bpn}[1]{\begin{proposition} \label{#1}}
\newcommand{\epn}{\end{proposition}}
\newcommand{\bex}[1]{\begin{example} \label{#1}}
\newcommand{\eex}{\eop\end{example}}
\newcommand{\bde}[1]{\begin{definition} \label{#1}}
\newcommand{\ede}{\end{definition}}
\newcommand{\bco}[1]{\begin{corollary} \label{#1}}
\newcommand{\eco}{\end{corolllary}}
\newcommand{\bexer}[1]{\begin{exercise} \label{#1}}
\newcommand{\eexer}{\end{exercise}}
\newcommand{\bre}[1]{\begin{remark} \label{#1}}
\newcommand{\ere}{\end{remark}}
\title{\bf On the Minimum Length of Functional Batch Codes with Small Recovery Sets}
\author[1]{\sc Kristiina Oksner}
\author[2]{\sc Henk D.L. Hollmann}
\author[3]{\\\sc Ago-Erik Riet}
\author[4]{\sc Vitaly Skachek}
\affil[1,2,3,4]{University of Tartu\thanks{\tt kristiina.oksner@gmail.com,\{henk.hollmann, ago-erik.riet, vitaly\}@ut.ee}
\thanks{This work is supported in part by the Estonian Research Council grant PRG2531.
The work of K. Oksner is also supported in part by the Estonian National Culture Foundation via the Monika Oit Foundation Fellowship.}}
\date{}
\begin{document}
\maketitle


\begin{abstract}
Batch codes are of potential use for load balancing and private information retrieval 
in distributed data storage systems. Recently, a special case of batch codes, termed functional 
batch codes, was proposed in the literature. In functional batch codes, users may query 
linear combinations of the information symbols, and not only the information symbols themselves, 
as is the case for standard batch codes. In this work, we consider linear functional batch codes with the 
additional property that every query is answered by using only a small number of coded symbols. We
derive bounds on the minimum length of such codes, and evaluate the bounds numerically. 
\end{abstract}

\medskip

\noindent \textbf{Keywords:} functional batch codes, generating functions, lower bounds.

\vspace{-1ex}
\section{Introduction}
 
Batch codes were first proposed in \cite{Batch} for load balancing in multi-server distributed data storage systems. They are also potentially useful in applications such as network switches and private information retrieval. Linear batch codes~\cite{linearBatch} are a special case of batch codes, where the encoding procedure is a linear transformation applied to the information symbols over a finite field. 
The properties of batch codes were extensively studied over the years: the reader can refer, for example, to 
\cite{asi2018constructions, fazeli2015pirlowstorageoverhead, lember2024equalrequestsasymptoticallyhardest, riet2022asynchronous, vardy2016constructions, wang2015optimal}.
Batch codes with a small size of recovery sets were studied in~\cite{batchRes}. 

Functional batch codes, a special case of batch codes, were studied in \cite{hollmann2023simplex, Length2025, nassar2022array, yohananov2022optimalconstructions, bounds}. Unlike regular batch codes, where only information symbols can be queried, in functional batch codes any linear combination of information symbols can be queried and retrieved. In~\cite{bounds}, a lower bound on the minimal length of a functional batch code was derived.

Both linear batch codes and linear functional batch codes can be defined by their generator matrices $\bldG$. The length of the code is, therefore, equal to the number of columns in $\bldG$. In the literature, usually, the number of code symbols used for a single retrieval is unrestricted. However, in practice, this could lead to extremely high traffic in the system. Therefore, it would be beneficial to restrict the number of symbols used in the retrieval of each query. In this work, we investigate such functional batch codes, where the number of symbols used in each retrieval is small, typically bounded from above by a small constant. In particular, 
we investigate the asymptotic behavior of lower bounds on the minimal length of such codes. 

\vspace{-1ex}
\section{Preliminaries and notation}

In this section, we present notations and some results that will be used in the sequel. 
We denote by $\nn$ the set of natural numbers (including zero), 
${\nn}^{+} = \nn \backslash \{ 0 \}$. We also denote by $\zz$ the set of integers, and by $\rr$ the set of real numbers. 
We adopt the notation $[n]$ for $n \in \nn$, $n \ge 1$, to denote the set $\{1,2,\cdots,n\}$.
Throughout the paper, let $\ff_2$ be a field with two elements. 
For a matrix $\bldG$, we denote by $\bldg^{(j)}$ its $j$-th column. Additionally, $\mathbf{e}_{i}$ denotes the unit vector with a one in position $i$ and zeros elsewhere, and  $\bldzero$ denotes the all-zero vector. 
For a set $T$ of vectors over $\ff_2$, we denote by $\langle T \rangle$ its linear span. 

\begin{definition} For $n\in \nn$ and $m\in \zz$ with $0\leq m\leq n$, we define
$(n)_m:=n(n-1)\cdots (n-m+1)$; in particular, $(n)_0=1$ and $(n)_n=n!$.
\end{definition}
\noindent
Numbers $(n)_m$ are often called \emph{falling factorials}. We use the following well-known result.
\ble{LLfall}
Let $m,n \in \nn^{+}$, $n \ge m$. It holds: $(n)_m\leq (n-(m-1)/2)^m\leq n^m$.
\ele
\bpf The first inequality follows from the arithmetic mean-geometric mean inequality; the second one is obvious.
\epf

\begin{definition}
Let $i_1+\ldots+i_m=n$, where $i_j \in \nn$ for $j \in [m]$, and $n \in \nn$. A multinomial coefficient is defined as follows:
\[
\binom{n}{i_1, i_2, \ldots, i_m} := \frac{n!}{i_1! \, i_2! \, \ldots \, i_m!}  \; .
\]
We sometimes write this as ${n\choose\bfi}$, where $\bfi=(i_1, i_2, \ldots, i_m)$.
\end{definition}
\noindent
Note that the multinomial coefficient above in fact counts the number of partitions of $[n]$ into $m$ disjoint (labeled) subsets of~$[n]$ of sizes $i_1, i_2, \ldots, i_m$, respectively, for $\sum_{j=1}^m i_j=n$.


\begin{theorem} 
\label{MultinomialThorem}
For arbitrary numbers $x_1, x_2, \ldots, x_m \in \rr$ and for $n \in {\nn}$, the following
identity holds:
\vspace{-1ex}
\begin{equation*}
    (x_1 +x_2 + \ldots + x_m)^n = \sum_{\substack{i_1 + i_2 + \ldots + i_m = n\\ i_1, i_2, \ldots, i_m \geq 0}} \binom{n}{i_1, i_2, \ldots, i_m} \; x_1^{i_1} x_2^{i_2}\ldots x_m^{i_m}.
\end{equation*}
\end{theorem}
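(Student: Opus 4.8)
The plan is to give a combinatorial proof by expanding the $n$-fold product and collecting like terms, exactly along the lines suggested by the remark following the definition of the multinomial coefficient. First I would write $(x_1+\cdots+x_m)^n=\prod_{k=1}^n(x_1+\cdots+x_m)$ and apply the distributive law across the whole product. A fully expanded term is obtained by selecting, independently from each of the $n$ factors, one of the variables $x_1,\ldots,x_m$; encoding such a selection by a function $f\colon [n]\to [m]$, the corresponding term is $\prod_{k=1}^n x_{f(k)}$. Thus the expansion equals the sum of $\prod_{k=1}^n x_{f(k)}$ over all $m^n$ such functions $f$.

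Next I would collect equal monomials. For a function $f$, let $i_j=|f^{-1}(j)|$ be the number of factors from which $x_j$ is chosen; then $\prod_{k=1}^n x_{f(k)}=x_1^{i_1}x_2^{i_2}\cdots x_m^{i_m}$, where $i_1+\cdots+i_m=n$ and each $i_j\ge 0$. Hence the coefficient of a fixed monomial $x_1^{i_1}\cdots x_m^{i_m}$ is the number of functions $f\colon [n]\to [m]$ with $|f^{-1}(j)|=i_j$ for every $j$, which is the same as the number of partitions of $[n]$ into labeled blocks of sizes $i_1,\ldots,i_m$. By the remark recorded after the definition of the multinomial coefficient, this count is exactly $\binom{n}{i_1,\ldots,i_m}$. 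Summing over all admissible exponent vectors $\bfi=(i_1,\ldots,i_m)$ then gives the claimed identity.

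The only genuine content lies in the counting step, so the main thing to get right is the bijection between the monomials carrying a prescribed exponent pattern and the set partitions of $[n]$ into labeled blocks of the given sizes; once that is invoked, the rest is bookkeeping over the finite index set $\{\bfi : i_1+\cdots+i_m=n,\ i_j\ge 0\}$. An alternative I would keep in reserve is induction on $m$: the base case $m=1$ is immediate (and $m=2$ is the binomial theorem), and the inductive step groups $x_1+\cdots+x_m=(x_1+\cdots+x_{m-1})+x_m$, applies the binomial theorem in the last variable, and then the induction hypothesis. In that route the main obstacle shifts to verifying the multinomial recurrence $\binom{n}{i_1,\ldots,i_m}=\binom{n}{i_m}\binom{n-i_m}{i_1,\ldots,i_{m-1}}$ and reindexing the resulting double sum, which is more calculation than the combinatorial argument, so I would present the combinatorial proof as the primary one.
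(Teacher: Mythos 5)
Your proof is correct. Note that the paper itself states this classical result without any proof, so there is nothing to diverge from: your combinatorial argument (expanding the $n$-fold product, identifying terms with functions $f\colon[n]\to[m]$, and counting those with prescribed fiber sizes $i_1,\ldots,i_m$) is the standard one, and it invokes exactly the partition-counting interpretation of $\binom{n}{i_1,\ldots,i_m}$ that the paper records in the remark following its definition of the multinomial coefficient. The bookkeeping is complete, including the degenerate case $n=0$, so no gaps remain.
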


Consider a (finite or infinite) power series $S(x) = s_0 + s_1x + \cdots + s_n x^n + \cdots$. We denote by $[x^n]S(x)$ the coefficient $s_n$ (corresponding to the term $x^n$) in this series.


Next, we introduce batch codes. Assume that there is a $k$-symbol database, which is to be distributed among $n$ devices. A client would like to retrieve an arbitrary subset (or \textit{batch}) of $t$ symbols by reading the data stored on the devices. If too many reading requests are directed to the same device, that could create a heavy load on the device, thus slowing down the performance of the system. Batch codes could prevent that situation by improved load balancing. In this work, we study optimal trade-offs between the number of devices $n$ and the number of queries $t$. We focus on binary linear batch codes. 

\begin{definition}[\cite{Batch,vardy2016constructions}]
\label{def:batch}
An $[n, k, t]$ \emph{linear batch code} $\code$ over $\ff_2$ encodes a
vector $\bldx \in \ff_2^k$ into a vector $\bldy \in \ff_2^n$, $\bldy = (y_1, y_2, \cdots, y_n)$, as $\bldy = \bldx \cdot \bldG$, such that for any $t$-tuple of queries $(x_{i_1}, x_{i_2}, \cdots, x_{i_t})$, $i_\ell \in [k]$ for $\ell \in [t]$,
there exist mutually disjoint subsets ${R}_{i_1}, {R}_{i_2}, \ldots, {R}_{i_t}$ of $[n]$, such that each of the values of $x_{i_\ell}$, $\ell \in [t]$, can be retrieved only from the symbols of $\bldy$ with indices in ${R}_{i_\ell}$, respectively. The $k \times n$ matrix $\bldG$ over~$\ff_2$ is called the \emph{generator matrix} of $\code$.  
\label{def:batch}
\end{definition}

The rate of $\code$ is defined as $r := k/n$. 
For simplicity, an $[n, k, t]$ linear batch code is sometimes referred to as a $t$-batch code. We remark that each encoded symbol $y_j$, $j \in [n]$, can be written as $y_j = \sum_{i=1}^k g_{i,j} x_i=\bfx\cdot \bfg^{(j)}$, where 
\[
\bldG =\left( \bfg^{(1)} \mid \bfg^{(2)} \mid \cdots \mid \bfg^{(n)} \right) = \left(g_{i,j}\right)_{i \in[k] ,j \in [n]} \; .
\]

The following definition naturally generalizes Definition~\ref{def:batch}.
\begin{definition}
An $[n, k, t]$ linear \emph{functional} batch code $\code$ over $\ff_2$ 
encodes a vector $\bldx \in \ff_2^k$ into a vector $\bldy \in \ff_2^n$ as $\bldy = \bldx \cdot \bldG$, such that
for any $(\bldalpha_1, \bldalpha_2, \cdots, \bldalpha_t)$, $\bldalpha_\ell \in \ff_2^k$ for $\ell \in [t]$, there exist mutually disjoint subsets  $R_1, R_2, \ldots, R_t$ of $[n]$, and for any $\ell \in [t]$,
the value $\bldx \cdot \bldalpha_\ell^T$ can be retrieved only from 
the symbols of $\bldy$ with indices in $R_\ell$, respectively. 
\end{definition}

For $\ell \in [t]$, we denote $\bldalpha_\ell = (\alpha_{1,\ell}, \alpha_{2,\ell}, \cdots, \alpha_{k,\ell})$. We say that the functional query is 
\[
(\bldx \cdot \bldalpha_1^T, \bldx \cdot \bldalpha_2^T, \cdots, \bldx \cdot \bldalpha_t^T) = \left(\sum_{i = 1}^k \alpha_{i,1} x_{i}, \sum_{i = 1}^k \alpha_{i,2} x_{i}, \cdots, \sum_{i = 1}^k \alpha_{i,t} x_{i}\right) \; ,
\] 
or simply $(\bldalpha_1, \bldalpha_2, \cdots, \bldalpha_t)$. 
We assume $\bldalpha_\ell \neq \bldzero$ for all $\ell$, otherwise the retrieval is trivial. 

We note that Definition~\ref{def:batch} corresponds to the case where $\bldalpha_1, \bldalpha_2, \cdots, \bldalpha_t$ are all unit vectors. It is straightforward to see that an $[n, k, t]$ linear functional batch code is in particular an $[n, k, t]$ linear batch code. Since we consider only linear codes in this work, we drop the word ``linear'' in the sequel.

Theorem 1 in \cite{linearBatch} characterizes the recoverability of the batch of queries
in a batch code. Below, we generalize it towards functional batch codes. 

\begin{theorem}\label{thrm:recovery}
Let $\code$ be an $[n, k, t]$ functional batch code. It is possible to retrieve $(\bldalpha_1, \bldalpha_2, \cdots, \bldalpha_t)$ simultaneously if 
and only if there exist $t$ mutually-disjoint subsets $R_1, R_2, \cdots, R_t$ of indices in $[n]$, 
and for each $R_\ell$, $\ell \in [t]$, there exists a linear combination of columns of $\bldG$ indexed by that set, which is equal to the column vector $\bldalpha_\ell^T$.
\end{theorem}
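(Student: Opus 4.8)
The plan is to reduce the simultaneous retrieval of all $t$ queries to a per-query condition, which is legitimate because the subsets $R_1,\ldots,R_t$ are required to be mutually disjoint on both sides of the claimed equivalence. Fixing $\ell\in[t]$, I would first make precise what it means to retrieve $\bldx\cdot\bldalpha_\ell^t$ from the symbols $(y_j)_{j\in R_\ell}$: there must exist a decoding map $\phi_\ell$ with $\phi_\ell\bigl((y_j)_{j\in R_\ell}\bigr)=\bldx\cdot\bldalpha_\ell^t$ for every information vector $\bldx\in\ff_2^k$. The full theorem then follows from the single-query claim that such retrieval is possible if and only if $\bldalpha_\ell^t\in\langle\,\bfg^{(j)}:j\in R_\ell\,\rangle$, applied independently to each $\ell$ over the disjoint sets $R_\ell$.

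For the ``if'' direction I would argue by explicit linear decoding. Suppose $\bldalpha_\ell^t=\sum_{j\in R_\ell}c_j\,\bfg^{(j)}$ for some coefficients $c_j\in\ff_2$. Using $y_j=\bfx\cdot\bfg^{(j)}$, I compute
\[
\sum_{j\in R_\ell}c_j\,y_j=\sum_{j\in R_\ell}c_j\,(\bfx\cdot\bfg^{(j)})=\bfx\cdot\Bigl(\sum_{j\in R_\ell}c_j\,\bfg^{(j)}\Bigr)=\bfx\cdot\bldalpha_\ell^t,
\]
so the desired value is recovered by a linear combination of the symbols indexed by $R_\ell$, valid for all $\bldx$. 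Performing this for each $\ell$ over the disjoint $R_\ell$ yields simultaneous retrieval.

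The ``only if'' direction requires more care, since the definition of retrieval does not a priori restrict $\phi_\ell$ to be linear. Writing $\pi_{R_\ell}\colon\ff_2^k\to\ff_2^{|R_\ell|}$, $\bldx\mapsto(\bfx\cdot\bfg^{(j)})_{j\in R_\ell}$, retrievability forces $\bldx\cdot\bldalpha_\ell^t$ to be a function of $\pi_{R_\ell}(\bldx)$; in particular every $\bldz\in\ker\pi_{R_\ell}$ must satisfy $\bldz\cdot\bldalpha_\ell^t=0$. Hence $\ker\pi_{R_\ell}\subseteq\langle\bldalpha_\ell\rangle^{\perp}$, and since $\ker\pi_{R_\ell}=\langle\,\bfg^{(j)}:j\in R_\ell\,\rangle^{\perp}$, taking orthogonal complements over $\ff_2$ and using $U^{\perp\perp}=U$ for subspaces $U$ gives $\bldalpha_\ell^t\in\langle\,\bfg^{(j)}:j\in R_\ell\,\rangle$, as required.

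I expect the main obstacle to be exactly this necessity step: converting the statement ``the query value is determined by the accessible symbols'' into the algebraic span condition, and justifying that allowing arbitrary (possibly nonlinear) decoders confers no advantage over linear ones. If the paper's convention already fixes decoding to be linear, as is standard for linear batch codes, then necessity collapses to reading off the identity in the ``if'' computation and the orthogonality argument may be omitted; otherwise the duality argument above is the crux.
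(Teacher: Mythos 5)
Your proposal is correct and follows essentially the same route as the paper: the ``if'' direction is the identical explicit linear-combination computation, and your ``only if'' argument (kernel vectors of the projection onto the $R_\ell$-coordinates are indistinguishable from $\bldzero$, hence $W_\ell^{\perp}\subseteq\langle\bldalpha_\ell\rangle^{\perp}$, then apply $U^{\perp\perp}=U$) is the direct, contrapositive-free phrasing of the paper's contradiction argument, which likewise picks $\bldx\in W_\ell^{\perp}$ with $\bldx\cdot\bldalpha_\ell^t\neq 0$ and compares the encodings of $\bldx$ and $\bldzero$. The paper, like you, makes no linearity assumption on the decoder, so your duality step is indeed the crux there as well.
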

\begin{proof}
$\,$
First, assume that there exist $t$ non-intersecting subsets $R_1, R_2, \cdots, R_t$ of indices in $[n]$, and for each $R_\ell$, $\bldalpha_\ell^T \in \langle \bldg^{(j)}  \rangle_{j \in R_\ell}$. 
For each $\ell \in [t]$, take
\[
\bldalpha_\ell^T = \sum_{j \in R_\ell} \gamma_j \cdot \bldg^{(j)} \; , 
\]
where all $\gamma_j \in \ff_2$. Due to linearity, the encoding of $\bldx = (x_1, x_2, \cdots, x_k)$ can be written as $\bldy = \bldx \cdot \bldG$.
Then, 
\vspace{-1ex}
\begin{equation*}
\bldx \cdot \bldalpha_\ell^T = \bldx \cdot \left( \sum_{j \in R_\ell} \gamma_j \cdot \bldg^{(j)} \right) 
= \sum_{j \in R_\ell} \gamma_j \cdot \left(\bldx \cdot \bldg^{(j)} \right) 
= \sum_{j \in R_\ell} \gamma_j \cdot y_j \; . 
\end{equation*}
Therefore, the value of $\bldx \cdot \bldalpha_\ell^T$ can be obtained by querying only the values of $y_j$ for $j \in R_\ell$. The conclusion follows from the fact that all $R_\ell$ are disjoint. 

Next, we show the opposite direction of Theorem~\ref{thrm:recovery}. 
Let $R_\ell$, for $\ell \in [t]$, be a set of indices of entries in $\bldy$, which are used to retrieve $\bldx \cdot \bldalpha^T_\ell$. We show that $\bldalpha_\ell^T \in \langle \bldg^{(j)}  \rangle_{j \in R_\ell}$. 

Denote the vector space $W_\ell := \langle \bldg^{(j)} \rangle_{j \in R_\ell}$.
Assume for a contradiction that $\bldalpha_\ell \notin W_\ell$. 
Recall that the dual space of $W_\ell$, denoted by $W_\ell^\perp$, 
consists of all the vectors orthogonal to
any vector in $W_\ell$. Since $\bldalpha_\ell^T \notin (W_\ell^\perp)^\perp = W_\ell$, there exists a vector $\bldx^T \in W_\ell^\perp$, 
which is not orthogonal to $\bldalpha_\ell^T$, i.e. $\bldx \cdot \bldalpha_\ell^T \neq 0$. 
On the other hand, this vector $\bldx^T$ is orthogonal to any vector $\bldg^{(j)}$ for $j \in R_\ell$. 

Consider the encodings of the vectors $\bldx$ and $\bldzero$, $\bldx \cdot \bldG$ and $\bldzero \cdot \bldG$, respectively. 
In both cases, all the coordinates of $\bldy$ indexed by $R_\ell$ are zeros. Therefore, the result of the retrieval of $\bldalpha_\ell$ in both cases is the same, yet $\bldx \cdot \bldalpha_\ell^T \neq \bldx \cdot \bldzero^T$. We obtain a contradiction. 
\end{proof}

\begin{definition}
Let $\boldsymbol{G} = ( \bldg^{(1)} \mid \bldg^{(2)} \mid \cdots \mid \bldg^{(n)})$ be a $k\times n$ matrix over $\ff_2$. A \emph{recovery set} for a (nonzero) vector $\bldalpha \in \ff_2^k\setminus\{\mathbf{0}\}$ is a subset $R \subseteq [n]$ of column indices such that $\bldalpha$ is contained in the span $\langle \bfg^{(j)} \rangle_{j\in R}$ of the columns of~$\boldsymbol{G}$ with indices in~$R$. We say $R$ is a \emph{minimal recovery set} for~$\bldalpha$ if $R$ does not contain a proper subset that is also a recovery set for $\bldalpha$.
\end{definition}

\begin{definition}
Let $\code$ be an $[n,k,t]$ (functional) batch code. We say that $\code$ has \emph{locality} $r \in \nn^{+}$, 
if for any batch $(\bldalpha_1, \bldalpha_2, \cdots, \bldalpha_t)$ there exist 
disjoint recovery sets $R_1, R_2, \cdots, R_t$ such that $|R_\ell| \le r$ for all $\ell$. 
Such a code will also be denoted as an $[n,k,t,r]$ (functional) batch code.
\end{definition}

\begin{example}\label{BatchExample}
Consider the following linear binary code $\code$ whose binary $2 \times 3$ generator matrix is given by
\vspace{-1ex}
\[ \bldG =
{\small 
\begin{pmatrix}
1 & 0 & 1\\
0 & 1 & 1 
\end{pmatrix} \; .
}
\]
The encoding is given by $(y_1, y_2, y_3) = \bldx \bldG = (x_1, x_2, x_1 + x_2)$, where $\bldx = (x_1, x_2) \in \ff_2^2$.

We can retrieve any pair $(\bldalpha_1, \bldalpha_2)$ as shown in Table~\ref{table:example}. Therefore, $\code$ is a $[3,2,2,2]$ functional batch code. 
\begin{table}[h!]
\scriptsize
\centering
\begin{tabular}{|c|c|c|c|}
\hline
$\bldalpha_1, \bldalpha_2$ & $\bldx \cdot \bldalpha_1^T$ & $\bldx \cdot \bldalpha_2^T$ & $R_1, R_2$ \\
\hline
$\begin{pmatrix} 1, 0 \end{pmatrix}, \begin{pmatrix} 1, 0 \end{pmatrix}$ & $x_1 = y_1$ & $x_1 = y_2 + y_3$ &
$\{1\}, \{2,3\}$ \\
\hline
$\begin{pmatrix} 1, 0 \end{pmatrix}, \begin{pmatrix} 0, 1 \end{pmatrix}$ & $x_1 = y_1$ & $x_2 = y_2$ &
$\{1\}, \{2\}$ \\
\hline
$\begin{pmatrix} 1, 0 \end{pmatrix}, \begin{pmatrix} 1, 1 \end{pmatrix}$ & $x_1 = y_1$ & $x_1 + x_2 = y_3$ &
$\{1\}, \{3\}$ \\
\hline
$\begin{pmatrix} 0, 1 \end{pmatrix}, \begin{pmatrix} 1, 0 \end{pmatrix}$ & $x_2 = y_2$ & $x_1 = y_1$ &
$\{2\}, \{1\}$ \\
\hline
$\begin{pmatrix} 0, 1 \end{pmatrix}, \begin{pmatrix} 0, 1 \end{pmatrix}$ & $x_2 = y_1 + y_3$ & $x_2 = y_2$ &
$\{1,3\}, \{2\}$ \\
\hline
$\begin{pmatrix} 0, 1 \end{pmatrix}, \begin{pmatrix} 1, 1 \end{pmatrix}$ & $x_2 = y_2$ & $x_1 + x_2 = y_3$ &
$\{2\}, \{3\}$ \\
\hline
$\begin{pmatrix} 1, 1 \end{pmatrix}, \begin{pmatrix} 1, 0 \end{pmatrix}$ & $x_1 + x_2 = y_3$ & $x_1 = y_1$ &
$\{3\}, \{1\}$ \\
\hline
$\begin{pmatrix} 1, 1 \end{pmatrix}, \begin{pmatrix} 0, 1 \end{pmatrix}$ & $x_1 + x_2 = y_3$ & $x_2 = y_2$ &
$\{3\}, \{2\}$ \\
\hline
$\begin{pmatrix} 1, 1 \end{pmatrix}, \begin{pmatrix} 1, 1 \end{pmatrix}$ & $x_1 + x_2 = y_1 + y_2$ & $x_1 + x_2 = y_3$ &
$\{1,2\}, \{3\}$ \\
\hline
\end{tabular}
\caption{Recovery sets for possible queries $\bldalpha_1, \bldalpha_2$.}
\label{table:example}
\end{table}
\vspace{-1.5ex}
\end{example}

\begin{definition}
    Let $k \in \nn$, $k \ge 1$. A binary $[2^k-1, k]$ \emph{simplex code} is a linear code of length $n = 2^k-1$ and dimension $k$ over $\ff_2$, whose binary $k \times n$ generator matrix $\bldG$ contains all nonzero vectors in $\ff_2^k$ as columns exactly once.
\end{definition}

\vspace{-1ex}
It is conjectured in~\cite{hollmann2023simplex} that an $[2^k-1, k]$ simplex code is 
a $[2^k-1, k, 2^{k-1}, 2]$ functional batch code for any $k \ge 1$ (for example, 
for $k=2$ we obtain the $[3,2,2,2]$ functional batch code in Example~\ref{BatchExample}). 
Moreover, it is shown in \cite{lember2024equalrequestsasymptoticallyhardest}
that if $\bldG$ is a generator matrix of a $[2^k-1, k]$ simplex code, then 
the double-simplex code with the generator matrix $[\bldG \mid \bldG]$ is a $[2^{k+1}-2, k, 2^k, 2]$ functional batch code. 

\vspace{-1ex}
\subsection*{Prior results}

In \cite{bounds}, the authors presented lower bounds on the minimum length $n$ of an $[n,k,t]$ functional batch code. 
\begin{theorem}\cite[Theorem 23]{bounds} \label{BoundForRegularBatch} 
Suppose that $\code$ is an $[n,k,t]$ functional batch code over $\ff_2$. Then, $n/k \geq t/\log(t+1) + o(1)$.
\end{theorem}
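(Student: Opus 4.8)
The plan is to squeeze a purely combinatorial packing condition out of the batch property and then to bound it by counting the \emph{small} subsets of the $n$ columns. First I would feed the code the ``all-equal'' batches: for every nonzero $\bldalpha\in\ff_2^k$ the request $(\bldalpha,\bldalpha,\ldots,\bldalpha)$ must be served, so by \Tm{thrm:recovery} there are $t$ pairwise disjoint recovery sets for $\bldalpha$, which I replace by minimal ones. A minimal recovery set is an independent subset $R\subseteq[n]$ with $\sum_{j\in R}\bldg^{(j)}=\bldalpha$, hence $1\le|R|\le k$. Two bookkeeping facts then drive the argument: (i) a subset $R$ determines the vector it sums to, so subsets attached to distinct $\bldalpha$ are automatically distinct, and within a fixed $\bldalpha$ the $t$ sets are distinct because they are disjoint and nonempty; thus the whole collection consists of $N:=t(2^k-1)$ \emph{distinct} subsets of $[n]$. (ii) For a fixed $\bldalpha$ disjointness forces the chosen sets to have total size at most $n$, and summing over the $2^k-1$ targets gives total size at most $n(2^k-1)=(n/t)N$; so the $N$ distinct subsets have average size at most $n/t$.

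The second step confronts ``$N$ distinct subsets of average size $\le n/t$'' with the scarcity of small subsets. Writing $V(\sigma):=\sum_{j=0}^{\sigma}\binom{n}{j}$ for the number of subsets of size at most $\sigma$, any subset larger than $\sigma$ contributes more than $\sigma$ to the total size, so the total size exceeds $\sigma\,(N-V(\sigma))$; comparing with the budget $(n/t)N$ gives $\sigma\,(N-V(\sigma))<(n/t)N$ for every $\sigma$. Taking $\sigma=\lceil(1+\varepsilon)n/t\rceil$ rearranges this into $V(\sigma)>\tfrac{\varepsilon}{1+\varepsilon}N$, i.e.\ a constant fraction of all $N$ subsets already has size at most $(1+\varepsilon)n/t$. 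It remains to bound $V(\sigma)$ from above: using \Le{LLfall} in the form $\binom{n}{j}\le(n)_j/j!\le n^{j}/j!$ and summing the resulting series (where the generating-function and coefficient-extraction viewpoint of the preliminaries is convenient) one obtains the entropic estimate $\log_2 V(\sigma)\le n\,H(\sigma/n)$, with $H$ the binary entropy function. Combining the two bounds yields $k+\log_2 t+O(1)\le n\,H\!\left((1+\varepsilon)/t\right)$, and since $H\!\left((1+\varepsilon)/t\right)=\tfrac{1+\varepsilon}{t}\log_2 t\,(1+o(1))$ this gives $n\ge \tfrac{t k}{(1+\varepsilon)\log_2 t}(1-o(1))$; letting $\varepsilon\to0$ slowly produces the leading behaviour $n/k\ge t/\log_2(t+1)+o(1)$.

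The hard part will be obtaining the \emph{exact} logarithmic denominator $\log_2(t+1)$ rather than a weaker constant, and this is where I expect the real work to lie. Three tempting relaxations all lose too much: merely counting distinct subsets (dropping the size budget) only forces $N\le 2^n$, hence $n\gtrsim k+\log_2 t$; the hitting-set/minimum-distance relaxation of the packing condition only yields a Plotkin-type $n\gtrsim 2t$; and even the clean size-budget argument above, which treats the recovery sets as \emph{arbitrary} distinct subsets, matches $t/\log_2(t+1)$ only to leading order (it actually gives the entropic $n/k\ge 1/H(1/t)$, which agrees with $t/\log_2(t+1)$ asymptotically but not term-by-term for small $t$). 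Pinning the stated constant therefore requires exploiting the group structure that I have been discarding---namely that the $N$ subsets split into $2^k-1$ families, each consisting of $t$ pairwise disjoint sets summing to one \emph{common} target and each set being linearly independent---and feeding this extra rigidity into the binomial/multinomial estimates of \Tm{MultinomialThorem} and \Le{LLfall}. Balancing the threshold $\sigma$ against the growth of $\binom{n}{j}$ so that no constant factor is wasted is the delicate point.
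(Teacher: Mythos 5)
Your proposal does not prove the stated theorem; the counting framework you set up has a strictly weaker bound as its ceiling, and your own final paragraph concedes this without closing the gap. Concretely: restricting attention to all-equal batches hands you only $N=t(2^k-1)$ distinct subsets of average size at most $n/t$, so the volume/entropy step yields $k+\log_2 t+O(1)\le n\,H\bigl((1+\varepsilon)/t\bigr)$ and hence, after $\varepsilon\to 0$, only $n/k\ge 1/H(1/t)+o(1)$. But for every $t\ge 2$ one has $t\,H(1/t)=\log_2 t+\log_2\bigl(1+\tfrac{1}{t-1}\bigr)^{t-1}\ge \log_2 t+1>\log_2(t+1)$, so $1/H(1/t)<t/\log_2(t+1)$ always (for $t=2$: your bound gives $n/k\ge 1+o(1)$ versus the required $2/\log_2 3\approx 1.26$), and the deficit $t/\log_2(t+1)-1/H(1/t)$ grows like $t\log_2 e/(\log_2 t)^2$, so it is not absorbed by the $o(1)$ in any asymptotic regime. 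Nor is this a removable slack inside your framework: all subsets of size exactly $\lfloor n/t\rfloor$ already form a family of $2^{nH(1/t)(1-o(1))}$ distinct sets obeying your size budget, so no sharpening of the ``distinct small subsets'' count alone can beat $1/H(1/t)$; the extra linear-algebraic rigidity you gesture at in the last paragraph would have to be exploited in an essentially different way, which you do not do.

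The missing idea --- and the paper's actual proof --- is to use all $(2^k-1)^t$ ordered batches rather than the $2^k-1$ all-equal ones, and to count \emph{labellings} of $[n]$ rather than individual subsets. For each batch $(\bldalpha_1,\ldots,\bldalpha_t)$ fix disjoint minimal recovery sets $R_1,\ldots,R_t$ and label index $j\in[n]$ by $\ell$ if $j\in R_\ell$, by $0$ otherwise. Over $\ff_2$ a minimal recovery set satisfies $\bldalpha_\ell=\sum_{j\in R_\ell}\bldg^{(j)}$ (this is exactly your fact (i)), so the batch can be read off from its labelling and the map from batches to labellings is injective; therefore
\begin{equation*}
(2^k-1)^t \;\le\; \#\bigl\{\mbox{labellings of $[n]$ by $\{0,1,\ldots,t\}$}\bigr\} \;=\; (t+1)^n ,
\end{equation*}
and taking logarithms gives $n\log_2(t+1)\ge t\,k+t\log_2(1-2^{-k})$, i.e.\ $n/k\ge t/\log_2(t+1)+o(1)$ exactly. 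This repairs both of your losses at once: passing to general batches multiplies the available information by $t$ (from $k+\log_2 t$ bits up to $tk$ bits), and the exact count $(t+1)^n$ of labellings replaces the lossy entropic estimate, so the constant $\log_2(t+1)$ emerges with no slack.
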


In our work, we build on some ideas from the proof of Theorem 23 in \cite{bounds}. We recall the main idea of the proof. Assume that the disjoint subsets $R_1, R_2, \ldots, R_t \subseteq [n]$ are used for recovery of $t$ functional queries. With each such set $R_\ell$, we associate a labeling of the indices in $[n]$ by labels $\ell$, $\ell \in [t]$, where index $j$ is assigned a label $\ell$ if $j\in R_\ell$, and it is assigned a special label 0 if index $j$ does not occur in any of the aforementioned recovery sets. The number of such labellings is $(t+1)^n$. We remark that distinct queries have distinct batches of minimal recovery sets, and thus distinct labellings. 

The number of possible ordered batches of queries $(\bldalpha_1, \bldalpha_2, \ldots, \bldalpha_t)$ for $\bldalpha_i \in \ff_2^k \setminus \{ \bldzero \}$, $i \in [t]$, equals $(2^k-1)^t$. Since the number of distinct labellings is bounded from below by the number of batches of queries, we obtain that  
$(t+1)^n \geq (2^k-1)^t$. \qed

We remark that in the proof of Theorem 23 in \cite{bounds}, it was assumed that the queries are distinct, hence the number of ways to choose $t$ queries is $\binom{2^k-1}{t}t!$. In our proof, we allow the same queries to be repeated more than once. In that case, there are $(2^k-1)^t$ different ordered batches of queries. That approach allows for a slightly more accurate estimate on the number of batches of queries, when compared with the proof in \cite{bounds}.

\vspace{-1ex}
\section{Lower bounds on the minimal length}

In this section, we employ an approach based on generating functions in order to obtain lower bounds on the minimum length of $[n,k,t,r]$ functional batch codes. Consider an arbitrary batch of $t$ queries $\bldalpha_1,\bldalpha_2,\cdots, \bldalpha_t$. Denote the corresponding pairwise-disjoint minimal recovery sets by $R_1, R_2, \cdots, R_t \subseteq [n]$. Assume that the entries of $\bldy$ are indexed by the elements of $[n]$. Similarly to the approach in~\cite{bounds}, we assign some label $\ell \in \{0,1,\cdots,t\}$ to each entry index $j$ in $[n]$. Label $\ell \in [t]$ is assigned to entry $j \in [n]$ if the corresponding $y_j$ is used in the recovery set $R_\ell$, $|R_\ell| \le r$. If entry $j$ is not used in any recovery set, then it is assigned the label $0$.   

Let $\theta_{t,r}(n)$ denote the number of distinct labellings of disjoint subsets $R_1, \ldots, R_t$ of~$[n]$ of size at most $r$, or, equivalently, the number of distinct labellings of the numbers $[n]$ with labels from $\{0,1,\ldots, t\}$, using each nonzero label at least once and at most $r$ times. Our derivations of bounds will be based on the following observation in~\cite{bounds}. 
\btm{LTbasic}
If $\code$ is an $[n,k,t,r]$ functional batch code, then we have $\gth_{t,r}(n)\geq (2^k-1)^t$.
\etm
\vspace{-0.5ex} \noindent For the proof of~\Tm{LTbasic}, see the discussion immediately after \Tm{BoundForRegularBatch}.
\medskip

Obviously, we have that
\begin{equation} \label{eq:theta}
\theta_{t,r}(n) = \sum_{\tiny \begin{array}{c} i_0 \in\nn \;, i_1, i_2, \cdots, i_t \in [r] \; : \\ i_0 + i_1 + i_2 + \cdots + i_t = n \end{array}} \binom{n}{i_0, i_1, \ldots, i_t} \;. 
\end{equation}
Let
\vspace{-1ex}
\[
\Theta_{t,r}(x):=\sum_{n = 1}^{\infty}\frac{\theta_{t,r}(n)}{n!}x^n 
\]
denote the exponential generating function
for the sequence $(\theta_{t,r}(n))_{n\in\nn}$.
Write
\[
G_r(x) = \frac{1}{1!} x + \frac{1}{2!} x^2 + \cdots + \frac{1}{r!} x^r \; , 
\]
and note that $e^x=\sum_{j=0}^\infty\frac{x^j}{j!}$. By using (\ref{eq:theta}),
we find that
\beql{LEftrexp}
\Theta_{t,r}(x)=e^x \cdot \left(G_r(x)\right)^t \; .
\eeql 
We can use (\ref{LEftrexp}) to derive a useful recursion for the values of $\gth_{t,r}(n)$. Indeed, since 
\[
\Theta_{t,r}(x)=e^xG_r(x)^t=G_r(x)\Theta_{t-1,r}(x)=\left(x+x^2/2+\cdots+x^r/r!\right)\Theta_{t-1,r}(x) \; ,
\]
we have 
\beqal{LEfrec}
\gth_{t,r}(n) \!\!\!\!&{=}&\!\!\!\! n!\cdot [x^n]\, \left( \left(\frac{x}{1!}+\frac{x^2}{2!}+\cdots+\frac{x^r}{r!}\right)\Theta_{t-1,r}(x)\right) \nonumber\\
&{=}&\!\!\!\! n!\left(\frac{\gth_{t-1,r}(n-1)}{(n-1)! \cdot 1!}+\frac{\gth_{t-1,r}(n-2)}{(n-2)! \cdot 2!}+\cdots +\frac{\gth_{t-1,r}(n-r)}{(n-r)! \cdot r!}\right)\nonumber\\
&{=}&\!\!\!\!\! {n\choose 1}\gth_{t-1,r}(n-1)+{n\choose 2}\gth_{t-1,r}(n-2)+\cdots +{n\choose r}\gth_{t-1,r}(n-r).\;\;\;\;\;\;
\eeqal
Note that, in addition, $\gth_{0,r}(n)=1$ for all $n\geq1$. Recursion (\ref{LEfrec}) provides a convenient way to compute a table of the values of~$\gth_{t,r}(n)$.
Moreover, since 
\[
\gth_{t-1,r}(n-1)\geq \gth_{t-1,r}(n-2)\geq \cdots\geq \gth_{t-1,r}(n-r)\; ,
\]
we obtain that for $n \ge 2r-1$:
\beql{LErecineq}
\gth_{t,r}(n)\leq \left( {n\choose 1}+\cdots +{n\choose r}\right)\cdot\gth_{t-1,r}(n-1)\leq r\cdot{n\choose r}\cdot\gth_{t-1,r}(n-1)\;,
\eeql
which can be useful to prove bounds on~$\gth_{t,r}(n)$ by induction.
We now use (\ref{LEftrexp}) to derive alternative expressions for the values of $\theta_{t,r}(n)$ that are easier to analyze.
\bex{LEr=2est}
By using~(\ref{LEftrexp}), for $r=2$ we have:
\begin{multline}
\theta_{t,2}(n) \;=\; n! \cdot \,[x^n]\, \left((x+x^2/2)^t \sum_{j=0}^\infty\frac{x^j}{j!}\right)\;=\;
n!\cdot \sum_{i=0}^{\min\{t,n-t\}} \frac{{t\choose i}}{2^{i}(n-t-i)!}\\
\;\le\; \frac{(n)_t}{2^t} \cdot \sum_{i=0}^t{t \choose i}2^{t-i}(n-t)_i
\; \leq \; (n)_t \cdot (n-t+2)^t/2^t \; ,\label{LEft2nexp1} 
\end{multline}
where in the last step we used \Le{LLfall}.
\eex
We now derive an expression for~$\gth_{t,r}(n)$ for general $r$, by extending the approach in~\Ex{LEr=2est}.  Below, if $\bfi=(i_1, \ldots, i_r)$, then we write $|\bfi|:=i_1+2i_2+\cdots+ri_r$; note that $t\leq |\bfi|\leq rt$. 
By using (\ref{LEftrexp}), we have
\beqal{LEftrnexp}
\theta_{t,r}(n)
&=&n!\cdot [x^n]\, \left( (x+x^2/2+\cdots+x^r/r!)^t \sum_{j=0}^\infty\frac{x^j}{j!} \right) \nonumber\\
&=&n! \cdot \sum_{\bfi} \frac{{t \choose \bfi}}{(1!)^{i_1}(2!)^{i_2}\cdots(r!)^{i_r}(n-|\bfi|)!}\nonumber\\
&=& (n)_t \cdot \sum_{\bfi} \frac{{t \choose \bfi}(n-t)_{|\bfi|-t}}{(1!)^{i_1}(2!)^{i_2}\cdots(r!)^{i_r}} \; ,
\eeqal
where the sum is over all $r$-tuples $\bfi=(i_1, i_2, \ldots, i_r)$ of non-negative integers with $i_1+i_2+\cdots +i_r=t$ and $|\bfi| \le n$. By~\Le{LLfall}, we have $(n-t)_{|\bfi|-t}\leq (n-t)^{|\bfi|-t}$,
thus
\beqal{LEftrnspec} 
\gth_{t,r}(n) 
&\leq&  (n)_t \cdot \sum_{\bfi}{t \choose \bfi}\left(\frac{(n-t)}{1!}\right)^{i_1}\left(\frac{(n-t)^2}{2!}\right)^{i_2}\cdots\left(\frac{(n-t)^r}{r!}\right)^{i_r}(n-t)^{-t}\nonumber\\
&\le& (n)_t 
\left( \frac{(n-t)^1}{1!}+\frac{(n-t)^2}{2!}+\cdots+\frac{(n-t)^r}{r!}\right)^t(n-t)^{-t}\nonumber\\
&=& (n)_t\left( \frac{1}{1!}+\frac{(n-t)}{2!}+\cdots+\frac{(n-t)^{r-1}}{r!}\right)^t\\ \label{LErest1}
&\leq& (n-(t-1)/2)^t(1+(n-t)/2!+\cdots+(n-t)^{r-1}/r!)^t \; ,\label{LErest2}
\eeqal
where in the last step we used \Le{LLfall}. Finally, if $n\geq t+r$, then 
$1\leq (n-t)/2!\leq \cdots \leq (n-t)^{r-1}/r!$, hence
\vspace{-1ex}
\beql{LErest3}\gth_{t,r}(n)\leq \left((n-(t-1)/2) \cdot \frac{(n-t)^{r-1}}{(r-1)!}\right)^t.
\eeql


We use our results to obtain the following lower bounds on~$n$. 
\begin{theorem}\label{LLmainest} 
Suppose that $\code$ is an~$[n,k,t,r]$ functional batch code, $n \ge t+r$. 
Then, 
\vspace{-1ex}
\begin{equation}
\label{eq:theta-bound}
2^k-1 \le (n-(t-1)/2) (n-t)^{r-1}/(r-1)! \; . 
\end{equation}
\end{theorem}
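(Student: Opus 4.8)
The plan is to combine the basic counting inequality from Theorem~\ref{LTbasic} with the explicit upper bound on $\theta_{t,r}(n)$ that was just established. By Theorem~\ref{LTbasic}, any $[n,k,t,r]$ functional batch code satisfies $\theta_{t,r}(n) \ge (2^k-1)^t$. The goal is therefore to produce a clean upper bound on $\theta_{t,r}(n)$ and chain the two inequalities together, then take a $t$-th root to isolate the quantity $2^k-1$.

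First I would invoke the bound from~\eqref{LErest3}, valid precisely under the hypothesis $n \ge t+r$, which states that
\begin{equation*}
\theta_{t,r}(n) \le \left( (n-(t-1)/2) \cdot \frac{(n-t)^{r-1}}{(r-1)!} \right)^t.
\end{equation*}
Combining this with $\theta_{t,r}(n) \ge (2^k-1)^t$ yields
\begin{equation*}
(2^k-1)^t \le \left( (n-(t-1)/2) \cdot \frac{(n-t)^{r-1}}{(r-1)!} \right)^t.
\end{equation*}
Since both sides are $t$-th powers of nonnegative reals (note $n \ge t+r$ guarantees $n-t \ge r > 0$ and $n - (t-1)/2 > 0$, so the base on the right is positive, and $2^k - 1 \ge 0$), the function $s \mapsto s^t$ is strictly increasing on $[0,\infty)$, and I may take $t$-th roots on both sides while preserving the inequality. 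This immediately gives
\begin{equation*}
2^k - 1 \le (n-(t-1)/2)\,(n-t)^{r-1}/(r-1)!,
\end{equation*}
which is exactly~\eqref{eq:theta-bound}.

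There is essentially no genuine obstacle here, since all the analytic work was front-loaded into the derivation of~\eqref{LErest3} through the chain~\eqref{LEftrnexp}--\eqref{LErest3}; this theorem is the payoff step that packages that estimate against the counting bound. The only points requiring a line of care are confirming that the positivity conditions hold under $n \ge t+r$ so that the $t$-th root is legitimate, and noting that the hypothesis $n \ge t+r$ is exactly what licenses the final monotonicity step $1 \le (n-t)/2! \le \cdots \le (n-t)^{r-1}/r!$ used to pass to~\eqref{LErest3}. Everything else is a direct substitution, so I expect the proof to be a short two- or three-line argument rather than a calculation-heavy one.
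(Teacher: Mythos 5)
Your proposal is correct and follows exactly the paper's own proof: chaining Theorem~\ref{LTbasic} with the bound~(\ref{LErest3}) (valid precisely for $n \ge t+r$) and taking $t$-th roots. The only difference is that you spell out the positivity and monotonicity justifications that the paper leaves implicit, which is fine.
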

\bpf
We use \Tm{LTbasic} together with~(\ref{LErest3}) to obtain~(\ref{eq:theta-bound}). 
Note that (\ref{LErest3}) holds for $n\geq t+r$.
\epf
\begin{corollary}
\label{LCnrest}
If an~$[n,k,t,r]$ binary functional batch code exists and $n\geq t+r$, then 
\[n\geq t-(t+1)/2r+\left((2^k-1)(r-1)!\right)^{1/r}.\]
\end{corollary}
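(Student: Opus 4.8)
The plan is to convert the multiplicative bound of \Tm{LLmainest} into an additive lower bound on $n$ by a single application of the arithmetic mean--geometric mean (AM--GM) inequality, choosing the grouping of factors so that the resulting $r$-th root matches the exponent $1/r$ that appears in the claim.

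First I would clear the denominator in the conclusion of \Tm{LLmainest}. Since $n \geq t+r$, multiplying both sides of \eq{eq:theta-bound} by $(r-1)!$ gives
\[
(2^k-1)(r-1)! \;\le\; \left(n-\tfrac{t-1}{2}\right)(n-t)^{r-1}.
\]
The right-hand side is a product of exactly $r$ nonnegative factors: one copy of $n-\tfrac{t-1}{2}$ together with $r-1$ copies of $n-t$. Positivity holds because $n \geq t+r > t$, so $n-t > 0$, and $n-\tfrac{t-1}{2} \geq \tfrac{t+1}{2} > 0$.

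Next I would apply AM--GM to these $r$ factors, which yields
\[
\left[\left(n-\tfrac{t-1}{2}\right)(n-t)^{r-1}\right]^{1/r}
\;\le\;
\frac{\left(n-\tfrac{t-1}{2}\right)+(r-1)(n-t)}{r}.
\]
A direct computation simplifies the right-hand side: the numerator equals $rn-\tfrac{t-1}{2}-(r-1)t = rn - \bigl(rt-\tfrac{t+1}{2}\bigr)$, so dividing by $r$ shows that the arithmetic mean equals $n - t + \tfrac{t+1}{2r}$. Combining the two displays and taking $r$-th roots on the left gives $\bigl((2^k-1)(r-1)!\bigr)^{1/r} \le n - t + \tfrac{t+1}{2r}$, and rearranging produces exactly the claimed inequality $n \geq t - \tfrac{t+1}{2r} + \bigl((2^k-1)(r-1)!\bigr)^{1/r}$.

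I do not expect a serious obstacle here. The only real choice is the grouping for AM--GM: splitting the product as the single factor $n-\tfrac{t-1}{2}$ alongside $r-1$ equal factors $n-t$ is precisely what produces a clean $r$-th root on one side and an arithmetic mean that collapses to the stated constant $\tfrac{t+1}{2r}$ on the other. Beyond that, the only things to verify are the sign conditions that make taking $r$-th roots and applying AM--GM legitimate, and these all follow from the hypothesis $n \geq t+r$.
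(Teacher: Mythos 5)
Your proposal is correct and matches the paper's own proof: both start from \Tm{LLmainest}, multiply through by $(r-1)!$, and apply the AM--GM inequality to the $r$ factors consisting of one copy of $n-\tfrac{t-1}{2}$ and $r-1$ copies of $n-t$, whose arithmetic mean simplifies to $n-t+\tfrac{t+1}{2r}$. Your write-up merely spells out the sign checks and the algebra that the paper leaves implicit.
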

\bpf
We use the result in~\Tm{LLmainest}. By the arithmetic mean-geometric mean inequality, we have that
\[ 
(n-(t-1)/2) (n-t)^{r-1}\leq \left(n-t+\frac{t+1}{2r}\right)^r \; .
\]
\vspace{-3ex}
\epf
%
%

We can use the recursion (\ref{LErecineq}) to obtain another (similar) bound.
\btm{LTnrest}If an~$[n,k,t,r]$ binary functional batch code exists and if $n \geq \max\{t+1, 2r-1\}$, then 
\[
n\geq (t+r)/2-1+\left((2^k-1)(r-1)!\right)^{1/r}.\]
\etm
\bpf
By using (\ref{LErecineq}) $t$ times, and noting that $\gth_{0,r}(n-t)=1$ if $n\geq t+1$ , we obtain
\[
\gth_{t,r}(n)\leq \frac{r \cdot (n)_r}{r!} \cdot \gth_{t-1,r}(n-1)\leq \cdots\leq 
r^t \cdot (n)_r \cdot (n-1)_r \cdot (n-t+1)_r/(r!)^t \; .
\]
Now, using the arithmetic mean-geometric mean inequality, we find that
\begin{multline*} 
rt\left((n)_r(n-1)_r\cdots(n-t+1)_r\right)^{1/(rt)}
\;\le\; \sum_{i=0}^{t-1}\sum_{j=0}^{r-1}(n-i-j)\\
\;=\; rtn-r\sum_{i=0}^{t-1}i-t\sum_{j=0}^{r-1}j
\;=\; rtn-rt(t-1)/2-tr(r-1)/2 \; ,
\end{multline*}
\vspace{-0.5ex}
hence
\vspace{-0.5ex} 
\[
\gth_{t,r}(n)\leq \frac{(n-(t+r)/2+1)^{rt}}{((r-1)!)^t} \; .
\]
The bound in the theorem follows from \Tm{LTbasic}. 
\epf

When $r=2$, the inequality~(\ref{LEft2nexp1}) and Theorem~\ref{LTbasic} imply 
\begin{eqnarray}
(2^k-1) & \le & (n)_t^{1/t}(n-t+2)/2 \; \le \; (n-(t-1)/2)(n-(t-2))/2 \nonumber \\
& \le & \frac{1}{2}(n-3t/4+5/4)^2 \; , \label{eq:2}
\end{eqnarray}
where the second transition follows from Lemma~\ref{LLfall}, and the third transition follows from the arithmetic mean-geometric mean inequality.
The following result follows from~(\ref{eq:2}). 
\begin{theorem} 
\label{thrm:tree}
    For any $[n,k,t,2]$ functional batch code, $t \ge 1$, we have
    \begin{equation*}
        n \geq \sqrt{2 \, (2^k-1)} + 3t/4 - 5/4 \; .
    \end{equation*}
\end{theorem}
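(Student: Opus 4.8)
The plan is to treat this as a direct algebraic inversion of the inequality chain~(\ref{eq:2}), which has already done all the combinatorial work. Reading off the last line of~(\ref{eq:2}), we have $2^k-1\le \frac{1}{2}\left(n-3t/4+5/4\right)^2$, and this is essentially the statement of the theorem in disguise. First I would multiply both sides by $2$ to obtain $2(2^k-1)\le \left(n-3t/4+5/4\right)^2$, and then take square roots of both sides.

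The one point that requires care is the sign when extracting the square root: since $\sqrt{a^2}=|a|$, taking roots only yields $\sqrt{2(2^k-1)}\le |n-3t/4+5/4|$, and to reach the clean linear bound I must know that the quantity $n-3t/4+5/4$ is nonnegative. This is the only step in the argument that is not purely mechanical.

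To settle the sign, I would note that an $[n,k,t,2]$ functional batch code must, for any batch of $t$ queries, supply $t$ pairwise disjoint nonempty recovery sets among the $n$ coordinates; hence $n\ge t$. It follows that $n-3t/4+5/4\ge t/4+5/4>0$, so the positive root is the relevant one and $\sqrt{2(2^k-1)}\le n-3t/4+5/4$. Rearranging then yields $n\ge \sqrt{2(2^k-1)}+3t/4-5/4$, as claimed.

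I expect no genuine obstacle here: the substantive content lives entirely in the derivation of~(\ref{eq:2}) (via~(\ref{LEft2nexp1}), Theorem~\ref{LTbasic}, Lemma~\ref{LLfall}, and the arithmetic mean--geometric mean inequality), and the present statement is merely its final rearrangement. The only thing to remain vigilant about is the justification of the sign condition above, without which the square-root extraction would be unwarranted.
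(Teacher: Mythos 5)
Your proposal is correct and matches the paper's own (implicit) proof exactly: the paper simply states that the theorem ``follows from~(\ref{eq:2})'', i.e., by multiplying by $2$ and extracting the square root, which is precisely your argument. Your additional verification of the sign condition $n-3t/4+5/4>0$ (via $n\ge t$, since the $t$ nonzero queries require $t$ disjoint nonempty recovery sets) is a detail the paper leaves implicit, and it is handled correctly.
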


\vspace{-1ex}
\section{Numerical results}
\label{numerical-results}
In this section, we present numerical results for the bounds derived. 
In Table~\ref{tab:double-simplex}, we compare lower bounds on the minimal length with the upper bound implied by the double-simplex code, for $r=2$. In columns 3-5, we show the lower bound in Theorem~\ref{thrm:tree}, the lower bound implied by the exact computation of $\theta_{t,r}(n)$ and substitution into Theorem~\ref{LTbasic}, and the
upper bound implied by the analysis in~\cite{lember2024equalrequestsasymptoticallyhardest}, respectively. 
ˇ
\begin{table}[h!]
\scriptsize
\centering
\begin{tabular}{|c|c|c|c|c|}
\hline
$k$ & $t$ & Theorem~\ref{thrm:tree} & Exact $\theta_{t,r}(n)$ & 
Construction \cite{lember2024equalrequestsasymptoticallyhardest} \\
\hline
2 & 4 & 5 & 5 & 6 \\
\hline
3 & 8 & 9 & 10 & 14 \\
\hline
4 & 16 & 17 & 19 & 30 \\
\hline
5 & 32 & 31 & 38 & 62 \\
\hline
6 & 64 & 58 & 74 & 126 \\
\hline
7 & 128 & 111 & 146 & 254 \\
\hline
\end{tabular}
\caption{Comparison of lower and upper bounds on minimal $n$ for $r=2$.}
\label{tab:double-simplex}
\end{table}
\vspace{0ex}

Table~\ref{tab:thrm7} shows the values of the lower bound on $n$ in Theorem~\ref{LTnrest} for various values of $k$, $t$, $r$, and compares them to the bound~\cite[Theorem 23]{bounds} (when the $o(1)$ term is ignored). 

\begin{table}[h!]
\scriptsize
\centering
\begin{tabular}{|c|c|c|c|c|c|}
\hline
$k$ & \cite[Theorem 23]{bounds}, $t=2$ & $t=2$, $r=2$ & $t=2$, $r=3$ & $t=3$, $r=3$ & $t=2$, $r=5$   \\
\hline
5 & 7 & 7 & 6 & 6 & - \\
\hline
6 & 8 & 9 & 7 & 8 & - \\
\hline
7 & 9 & 13 & 8 & 9 & 9 \\
\hline
8 & 11 & 17 & 10 & 10 & 10 \\
\hline
9 & 12 & 24 & 12 & 13 & 10 \\
\hline
10 & 13 & 33 & 15 & 15 & 11 \\
\hline
11 & 14 & 47 & 18 & 18 & 12 \\
\hline
12 & 16 & 65 & 22 & 23 & 13 \\
\hline
13 & 17 & 92 & 27 & 28 & 14 \\
\hline
14 & 18 & 129 & 34 & 34 & 16 \\
\hline
15 & 19 & 183 & 42 & 43 & 18 \\
\hline
\end{tabular}
\caption{Lower bounds in \cite[Theorem 23]{bounds} and Theorem~\ref{LTnrest}.}
\label{tab:thrm7}
\end{table}
\vspace{0ex}

\vspace{-1ex}
\section{Conclusions}
In this work, we considered functional batch codes with a restricted size of recovery sets.
By using exponential generating functions, we obtained estimates on the number of labellings of the indices of codeword symbols. Based on those estimates, we derived bounds on the minimal length of a code.


\bibliographystyle{ieeetr}
\bibliography{unitartucs-thesis}


\end{document}